\documentclass[a4paper,conference]{IEEEtran}
%\documentclass[12pt,draftcls,onecolumn]{IEEEtran}
%\documentclass[final,journal,twocolumn]{IEEEtran}
% If IEEEtran.cls has not been installed into the LaTeX system files,
% manually specify the path to it like:
% \documentclass[journal]{../sty/IEEEtran}

%\usepackage{geometry}
%\geometry{a4paper,top=19mm,bottom=44mm,left=13.5mm,right=13.5mm}
% Some very useful LaTeX packages include:
% (uncomment the ones you want to load)

% *** MISC UTILITY PACKAGES ***
%
%\usepackage{ifpdf}
% Heiko Oberdiek's ifpdf.sty is very useful if you need conditional
% compilation based on whether the output is pdf or dvi.
% usage:
% \ifpdf
%   % pdf code
% \else
%   % dvi code
% \fi
% The latest version of ifpdf.sty can be obtained from:
% http://www.ctan.org/tex-archive/macros/latex/contrib/oberdiek/
% Also, note that IEEEtran.cls V1.7 and later provides a builtin
% \ifCLASSINFOpdf conditional that works the same way.
% When switching from latex to pdflatex and vice-versa, the compiler may
% have to be run twice to clear warning/error messages.

% *** CITATION PACKAGES ***
%
\usepackage{cite}
% cite.sty was written by Donald Arseneau
% V1.6 and later of IEEEtran pre-defines the format of the cite.sty package
% \cite{} output to follow that of IEEE. Loading the cite package will
% result in citation numbers being automatically sorted and properly
% "compressed/ranged". e.g., [1], [9], [2], [7], [5], [6] without using
% cite.sty will become [1], [2], [5]--[7], [9] using cite.sty. cite.sty's
% \cite will automatically add leading space, if needed. Use cite.sty's
% noadjust option (cite.sty V3.8 and later) if you want to turn this off
% such as if a citation ever needs to be enclosed in parenthesis.
% cite.sty is already installed on most LaTeX systems. Be sure and use
% version 4.0 (2003-05-27) and later if using hyperref.sty. cite.sty does
% not currently provide for hyperlinked citations.
% The latest version can be obtained at:
% http://www.ctan.org/tex-archive/macros/latex/contrib/cite/
% The documentation is contained in the cite.sty file itself.

% *** GRAPHICS RENATED PACKAGES ***
%
% \ifCLASSINFOpdf
  % \usepackage[pdftex]{graphicx}
  % declare the path(s) where your graphic files are
  % \graphicspath{{../pdf/}{../jpeg/}}
  % and their extensions so you won't have to specify these with
  % every instance of \includegraphics
  % \DeclareGraphicsExtensions{.pdf,.jpeg,.png}
% \else
  % or other class option (dvipsone, dvipdf, if not using dvips). graphicx
  % will default to the driver specified in the system graphics.cfg if no
  % driver is specified.
  \usepackage[dvips]{graphicx}
  % declare the path(s) where your graphic files are
  \graphicspath{{figures/}}
  % and their extensions so you won't have to specify these with
  % every instance of \includegraphics
  % \DeclareGraphicsExtensions{.eps}
% \fi
% graphicx was written by David Carlisle and Sebastian Rahtz. It is
% required if you want graphics, photos, etc. graphicx.sty is already
% installed on most LaTeX systems. The latest version and documentation
% can be obtained at: 
% http://www.ctan.org/tex-archive/macros/latex/required/graphics/
% Another good source of documentation is "Using Imported Graphics in
% LaTeX2e" by Keith Reckdahl which can be found at:
% http://www.ctan.org/tex-archive/info/epslatex/
%
% latex, and pdflatex in dvi mode, support graphics in encapsulated
% postscript (.eps) format. pdflatex in pdf mode supports graphics
% in .pdf, .jpeg, .png and .mps (metapost) formats. Users should ensure
% that all non-photo figures use a vector format (.eps, .pdf, .mps) and
% not a bitmapped formats (.jpeg, .png). IEEE frowns on bitmapped formats
% which can result in "jaggedy"/blurry rendering of lines and letters as
% well as large increases in file sizes.
%
% You can find documentation about the pdfTeX application at:
% http://www.tug.org/applications/pdftex

% *** MATH PACKAGES ***
%
\usepackage[cmex10]{amsmath}
% A popular package from the American Mathematical Society that provides
% many useful and powerful commands for dealing with mathematics. If using
% it, be sure to load this package with the cmex10 option to ensure that
% only type 1 fonts will utilized at all point sizes. Without this option,
% it is possible that some math symbols, particularly those within
% footnotes, will be rendered in bitmap form which will result in a
% document that can not be IEEE Xplore compliant!
%
% Also, note that the amsmath package sets \interdisplaylinepenalty to 10000
% thus preventing page breaks from occurring within multiline equations. Use:
%\interdisplaylinepenalty=2500
% after loading amsmath to restore such page breaks as IEEEtran.cls normally
% does. amsmath.sty is already installed on most LaTeX systems. The latest
% version and documentation can be obtained at:
% http://www.ctan.org/tex-archive/macros/latex/required/amslatex/math/

% *** SPECIALIZED LIST PACKAGES ***
%
%\usepackage{algorithmic}
\usepackage{algorithmicx}
\usepackage{algpseudocode}
%\usepackage{varwidth}
% algorithmic.sty was written by Peter Williams and Rogerio Brito.
% This package provides an algorithmic environment fo describing algorithms.
% You can use the algorithmic environment in-text or within a figure
% environment to provide for a floating algorithm. Do NOT use the algorithm
% floating environment provided by algorithm.sty (by the same authors) or
% algorithm2e.sty (by Christophe Fiorio) as IEEE does not use dedicated
% algorithm float types and packages that provide these will not provide
% correct IEEE style captions. The latest version and documentation of
% algorithmic.sty can be obtained at:
% http://www.ctan.org/tex-archive/macros/latex/contrib/algorithms/
% There is also a support site at:
% http://algorithms.berlios.de/index.html
% Also of interest may be the (relatively newer and more customizable)
% algorithmicx.sty package by Szasz Janos:
% http://www.ctan.org/tex-archive/macros/latex/contrib/algorithmicx/

% *** ALIGNMENT PACKAGES ***
%
%\usepackage{array}
%\usepackage{mdwtab}
\usepackage{tabularx}
% Frank Mittelbach's and David Carlisle's array.sty patches and improves
% the standard LaTeX2e array and tabular environments to provide better
% appearance and additional user controls. As the default LaTeX2e table
% generation code is lacking to the point of almost being broken with
% respect to the quality of the end results, all users are strongly
% advised to use an enhanced (at the very least that provided by array.sty)
% set of table tools. array.sty is already installed on most systems. The
% latest version and documentation can be obtained at:
% http://www.ctan.org/tex-archive/macros/latex/required/tools/

% IEEEtran contains the IEEEeqnarray family of commands that can be used to
% generate multiline equations as well as matrices, tables, etc., of high
% quality.

% *** SUBFIGURE PACKAGES ***
%\ifCLASSOPTIONcompsoc
%  \usepackage[caption=false,font=normalsize,labelfont=sf,textfont=sf]{subfig}
%\else
  \usepackage[caption=false,font=footnotesize]{subfig}
\usepackage{url}
% url.sty was written by Donald Arseneau. It provides better support for
% handling and breaking URLs. url.sty is already installed on most LaTeX
% systems. The latest version and documentation can be obtained at:
% http://www.ctan.org/tex-archive/macros/latex/contrib/url/
% Basically, \url{my_url_here}.

% *** Do not adjust lengths that control margins, column widths, etc. ***
% *** Do not use packages that alter fonts (such as pslatex).         ***
% There should be no need to do such things with IEEEtran.cls V1.6 and later.
% (Unless specifically asked to do so by the journal or conference you plan
% to submit to, of course. )

% correct bad hyphenation here
\hyphenation{op-tical net-works semi-conduc-tor}
\usepackage{mathtools}
\usepackage{amssymb}
\usepackage{dsfont}
\usepackage{etoolbox}
\usepackage{accents}
\robustify{\underaccent}
\usepackage[colorlinks,hypertexnames=false]{hyperref}
\usepackage[capitalize]{cleveref}
\crefname{equation}{}{}
\usepackage{longtable}
\usepackage{bm}
\usepackage{amsthm}
\usepackage{commath}
\usepackage{framed}
\newtheorem{lemma}{Lemma}[section]
\newtheorem{proposition}{Proposition}[section]
\theoremstyle{definition}
\IEEEoverridecommandlockouts
\begin{document}
%
% paper title
% can use linebreaks \\ within to get better formatting as desired
% Do not put math or special symbols in the title.
\title{Green OFDMA Resource Allocation in Cache-Enabled CRAN}
% author names and IEEE memberships
% note positions of commas and nonbreaking spaces ( ~ ) LaTeX will not break
% a structure at a ~ so this keeps an author's name from being broken across
% two lines.
% use \thanks{} to gain access to the first footnote area
% a separate \thanks must be used for each paragraph as LaTeX2e's \thanks
% was not built to handle multiple paragraphs
%\author{\IEEEauthorblockN{Reuben George Stephen\IEEEauthorrefmark{1} and Rui Zhang\IEEEauthorrefmark{2}}
%\IEEEauthorblockA{\IEEEauthorrefmark{1}NUS Graduate School for Integrative Sciences
%and Engineering~(NGS)\\ 
%%National University of Singapore\\ 
%}
%\IEEEauthorblockA{\IEEEauthorrefmark{2}Department of Electrical and Computer Engineering\\
%National University of Singapore~(NUS)\\ 
%%Also with Institute for Infocomm Research,\\
%%A*STAR, Singapore.\\ 
%Email: reubenstephen@u.nus.edu, elezhang@nus.edu.sg}
%\thanks{\copyright 2016 IEEE. Personal use of this material is permitted. Permission from IEEE must be obtained for all other uses, in any current or future media, including reprinting/republishing this material for advertising or promotional purposes, creating new collective works, for resale or redistribution to servers or lists, or reuse of any copyrighted component of this work in other works.}
%}

\author{Reuben~George~Stephen%,~\IEEEmembership{Student Member,~IEEE}
~and~Rui~Zhang%,~\IEEEmembership{Senior Member,~IEEE}
%%         and~Jane~Doe,~\IEEEmembership{Life~Fellow,~IEEE}% <-this % stops a space
\thanks{R. G. Stephen is with the NUS Graduate School for Integrative Sciences and Engineering~(NGS), National University of Singapore~(e-mail: reubenstephen@u.nus.edu). He is also with the Department of Electrical and Computer Engineering, National University of Singapore.}
\thanks{R. Zhang is with the Department of Electrical and Computer Engineering, National University of Singapore~(e-mail: elezhang@nus.edu.sg). He is also with the Institute for Infocomm Research, A*STAR, Singapore.}% <-this % stops a space
\thanks{\copyright 2016 IEEE. Personal use of this material is permitted. Permission from IEEE must be obtained for all other uses, in any current or future media, including reprinting/republishing this material for advertising or promotional purposes, creating new collective works, for resale or redistribution to servers or lists, or reuse of any copyrighted component of this work in other works.}
%% \thanks{Manuscript received April 19, 2005; revised December 27, 2012.}
}

\maketitle

\begin{abstract}
Cloud radio access network~(CRAN), in which remote radio heads~(RRHs) %typically employed as simple relay nodes, 
are deployed to serve users in a target area, and connected to a central processor~(CP) via limited-capacity links termed the fronthaul, is a promising candidate for the next-generation wireless communication systems. Due to the content-centric nature of future wireless communications, it is desirable to cache popular contents beforehand at the RRHs, to reduce the burden on the fronthaul and achieve energy saving through cooperative transmission. This motivates our study in this paper on the energy efficient transmission in an orthogonal frequency division multiple access~(OFDMA)-based CRAN with multiple RRHs and users, where the RRHs can prefetch popular contents. We consider a joint optimization of the user-SC assignment, RRH selection and transmit power allocation over all the SCs to minimize the total transmit power of the RRHs, subject to the RRHs' individual fronthaul capacity constraints and the users' minimum rate constraints, while taking into account the caching status at the RRHs. Although the problem is %combinatorial and 
non-convex, we propose a Lagrange duality based solution, which can be efficiently computed with good accuracy. We compare the minimum transmit power required by the proposed algorithm with different caching strategies against the case without caching by simulations, which show the significant energy saving with caching. 
\end{abstract}
\begin{IEEEkeywords}
Caching, cloud radio access network~(CRAN), orthogonal frequency division multiple access~(OFDMA), resource allocation. %user association, weighted sum rate maximization, wireless fronthaul.
\end{IEEEkeywords}
\section{Introduction}
Cloud radio access network~(CRAN) provides a cost-effective way to achieve network densification and hence meet the exponential growth in wireless network traffic, by replacing the conventional base stations~(BSs) with low-power 
distributed remote radio heads~(RRHs) that are %deployed close to the users and 
coordinated by a central processor~(CP)%~\cite{andrews-etal2014what}
~\cite{bi-etal2015wireless}. In addition, %to 
CRAN offers both improved spectral efficiency and energy efficiency compared to conventional cellular networks, due to the centralized resource allocation and joint signal processing over the RRHs at the CP~\cite{shi-etal2014group,luo-etal2015downlink,liu-etal2015joint,stephen-zhang2016joint,dai-yu2016energy,stephen-zhang2016fronthaul}. However, along with the growth in the amount of wireless data traffic, the type of services required by users is also making a transition from the traditional \emph{connection-centric} communications such as voice calls and web surfing, to the so-called \emph{content-centric} communications such as video streaming, mobile application downloads, etc.~\cite{bi-etal2015wireless,tao-etal2016content}. An important characteristic of such content-centric communication is that the same contents are requested by multiple users at similar time. In order to address this paradigm shift in the nature of wireless traffic, it has been proposed to employ \emph{cache-enabled} RRHs in a CRAN~\cite{bi-etal2015wireless,tao-etal2016content}, where the RRHs can store popular contents beforehand, and hence, transmit the data requested by the users directly, without the need of fetching it from the CP over the fronthaul. Such a network architecture is also referred to as Fog Radio Access Network~(F-RAN)~\cite{park-etal2016joint}. Moreover, if the popular contents are cached at many RRHs in a CRAN, all of them can cooperatively transmit the data to many users at the same time, offering additional beamforming gains~\cite{liu-lau2013mixed,peng-etal2014joint,tao-etal2016content,park-etal2016joint}, and hence reducing the total transmit power required to satisfy the users' content requests. %However, when such pre-fetching or caching of popular contents is done at the RRHs, the wireless resource allocation in the CRAN needs to be optimized taking the caching profile into consideration. 
For a single-channel wireless system, the joint caching and transmit beamforming design was considered in~\cite{liu-lau2013mixed,peng-etal2014joint}. %, while a joint user association and . %In~\cite{liu-lau2013mixed}, the BSs can either transmit to the users separately, which is equivalent to an interference channel, or all the BSs jointly transmit to the users when all the BSs have the contents requested by all the users in their caches.
%a joint design of transmit beamforming and data assignment~(caching) at the BSs, was considered in~\cite{peng-etal2014joint}, while
When the caching placement is known, a joint optimization of the BS clustering and transmit beamforming was studied in~\cite{tao-etal2016content}, while~\cite{park-etal2016joint} considered the joint optimization of transmit precoding and quantization noise covariances. % with a hybrid fronthaul data transfer scheme. %In both papers, the objective was to minimize a weighted sum of the backhaul cost and the total transmission power, and sparsity based techniques are used for the beamforming design. 
%We consider a more general cooperation scheme compared to~\cite{liu-lau2013mixed}, and 
In contrast to the above work, we consider the orthogonal frequency division multiple access~(OFDMA)-based CRAN with multiple sub-channels~(SCs), where the RRHs are enabled with caches of fixed size. Since the caching at the RRHs takes place over a larger time-scale compared to the wireless resource allocation, the cache status at the RRHs remains unchanged over many scheduling intervals, and hence it is assumed to be known for the resource allocation problem in this work, as in~\cite{tao-etal2016content,park-etal2016joint}. 
\begin{figure}[t]
\centering
\includegraphics[width=\linewidth]{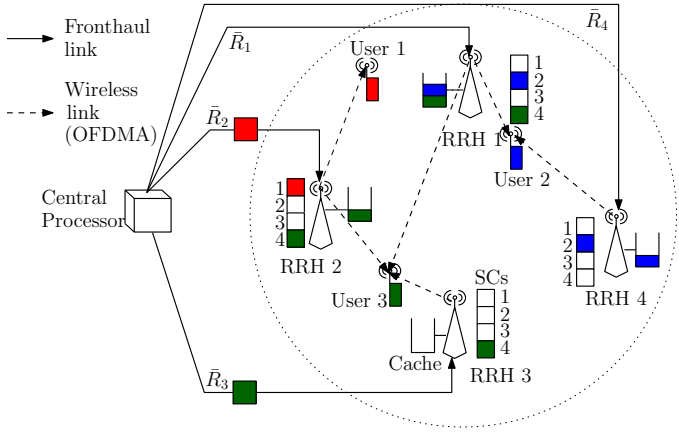}
\caption{Downlink of OFDMA-based CRAN with cache-enabled RRHs.}\label{F:SysModel}
\end{figure}

The availability of cached contents at multiple RRHs enables their cooperative transmission, thereby leading to energy savings in the network. For example, as shown in~\cref{F:SysModel}, due to the availability of user~3's contents at RRHs~1 and~2, the RRHs~1,~2~and~3 can cooperatively transmit to user~3, even though user~3 is located farther from RRHs~1 and~2 compared to RRH~3. %needs to take into account the caching status at the RRHs. 
Thus, with cache enabled RRHs, the user assignment, RRH selection and transmit power allocation on each SC must take into account the user requests and caching status at the RRHs, in addition to the individual fronthaul capacity constraints at the RRHs and minimum rate requirements at the users. Towards this end, in this paper we formulate a joint resource allocation problem in an OFDMA-based cache-enabled CRAN to minimize the total transmit power of all RRHs subject to to the users' minimum rate constraints in the downlink transmission. Although the problem is non-convex, we propose a Lagrange duality based algorithm, which is asymptotically optimal and  also efficient to implement. 
\section{System Model}
Consider the downlink of an OFDMA-based CRAN over bandwidth $B$~Hz, with $M$ single antenna RRHs denoted by $\mathcal{M}=\{1,\dotsc,M\}$, $K$ single-antenna users denoted by $\mathcal{K}=\{1,\dotsc,K\}$, $N$ SCs denoted by $\mathcal{N}=\{1,\dotsc,N\}$, and $F$ contents given by $\mathcal{F}=\{1,\dotsc,F\}$, as shown in~\cref{F:SysModel}. Each RRH can store up to $S\leq F$ contents in its cache. Let $c_{m,f}=1$ if content $f$ is cached at RRH $m$, and $c_{m,f}=0$ otherwise. Similarly let $u_{k,f}=1$ if user $k$ requests content $f$, and $u_{k,f}=0$ otherwise. Both the caching profile at the RRHs given by %the variables 
$\left\{c_{m,f}\right\}$, and the users' requests given by $\left\{u_{k,f}\right\}$ are assumed to be known at the CP. Each user requests at most one content at a time, i.e. $\sum_{f=1}^Fu_{k,f}=1,~\forall k\in\mathcal{K}$, but the same content can be requested by multiple users. Let $f_k\in\mathcal{F}$ denote the content requested by user $k\in\mathcal{K}$. Note that if a user does not request any content, then that user can be removed from consideration. The contents are transmitted to the users via OFDMA over one or more scheduling intervals depending on their size, while we consider the resource allocation optimization for a single scheduling interval. Let $\nu_{k,n}$ indicate whether user $k$ is assigned to SC $n$, i.e., %} 
\begin{align}
\nu_{k,n}=\begin{cases}1&\text{if user }k\text{ is assigned to SC }n\nonumber\\
0&\text{otherwise.}\end{cases}
\end{align}
Also define $\bm\nu_n\triangleq\begin{bmatrix}\nu_{1,n}&\cdots&\nu_{K,n}\end{bmatrix}^\mathsf{T}\in\{0,1\}^{K\times 1}$ as the user assignment on SC $n$. According to OFDMA, each SC $n\in\mathcal{N}$ is assigned to at most one user in the downlink transmission, and thus, $\bm 1^\mathsf{T}\bm\nu_n\leq 1,\enspace\forall n\in\mathcal{N}$. The set of SCs assigned to user $k$, denoted by $\mathcal{N}_k\subseteq\mathcal{N}$, is thus given by $\mathcal{N}_k=\left\{n\middle|\nu_{k,n}=1\right\}$, where $\mathcal{N}_j\cap\mathcal{N}_k=\emptyset,~\forall j\neq k,~j,k\in\mathcal{K}$. Since the fronthaul capacity for each RRH is practically limited, in general it can only receive the non-cached data for a selected subset of the users from the CP over its fronthaul, and then forward them to the selected users in the OFDMA-based downlink transmission. 
Let 
\begin{align}
\alpha_{m,n}=\begin{cases}
1&\text{if RRH }m\text{ transmits on SC }n\\
0&\text{otherwise.}
\end{cases}
\end{align}
Then, the subset of RRHs that transmit on SC $n$ is given by %\begin{align}
$\mathcal{A}_n=\left\{m\in\mathcal{M}\middle|\alpha_{m,n}=1\right\},\quad n\in\mathcal{N}$. %\label{E:SetSelRRH}\end{align}  
Thus, the RRHs in $\mathcal{A}_n$ cooperatively send the data to the user $k$ assigned to SC $n$, i.e., $\nu_{k,n}=1$. Let $h_{k,m,n}$ %=\left|h_{k,m,n}\right|e^{\angle h_{k,m,n}}$ 
denote the complex wireless access channel coefficient to the user $k\in\mathcal{K}$, from RRH $m\in\mathcal{M}$, on SC $n\in\mathcal{N}$, which is known at the CP, and $p_{m,n}\geq 0$ denote the power allocated by RRH $m$ on SC $n$. %\footnote{For a complex scalar $x$, $\left|x\right|$ denotes its magnitude and $\angle x$ denotes its phase.} %To achieve coherent signal combining from 
Then, with coherent transmission by all the RRHs in $\mathcal{A}_n$, the SNR at the receiver of the user $k$ assigned to SC $n$ can be expressed as~\cite{stephen-zhang2016joint}\footnote{For a complex scalar $x$, $\left|x\right|$ denotes its magnitude. %and $\angle x$ denotes its phase.
}
\begin{align}
\gamma_{k,n}\left(\bm\alpha_n,\bm p_n\right)
%=\frac{\left|\bm h_{k,n}^\mathsf{T}\bm x_n\right|^2}{\sigma^2}
=\frac{1}{\sigma^2}\left(\sum_{m=1}^M\left|h_{k,m,n}\right|\alpha_{m,n}\sqrt{p_{m,n}}\right)^2,\label{E:SCSNR}
\end{align}
$n\in\mathcal{N}_k$, where %$z\sim\mathcal{CN}(0,\sigma^2)$ is the additive white Gaussian noise~(AWGN), and 
$\sigma^2$ is the %receiver noise 
power of the additive white Gaussian noise~(AWGN) at the receiver, which is assumed to be equal at all users. %and 
The achievable rate on SC $n\in\mathcal{N}_k$ is thus 
\begin{align}
r_{k,n}\left(\bm\alpha_n,\bm p_n\right)=\frac{B}{N}\log_2\left(1+\gamma_{k,n}\left(\bm\alpha_n,\bm p_n\right)\right).\label{E:SCRate}
\end{align}
Next, we present the following result on the concavity of the function $r_{k,n}\left(\bm\alpha_n,\bm p_n\right)$. 
\begin{lemma}\label{L:rknConc}
With given RRH selection $\bm\alpha_n$, $r_{k,n}\left(\bm\alpha_n,\bm p_n\right)$ %$r_{k,n}\left(\tilde{\bm\alpha}_n,\bm p_n\right)$ as 
defined in~\eqref{E:SCRate} is %a 
jointly concave with respect to $\left\{p_{m,n}\right\}$, $\forall m$ with $\alpha_{m,n}=1$. 
\end{lemma}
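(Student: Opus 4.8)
The plan is to peel off the outer logarithm and reduce the claim to the concavity of the SNR expression $\gamma_{k,n}$ in~\eqref{E:SCSNR}, then invoke a standard composition rule. Throughout, fix the RRH selection $\bm\alpha_n$, let $\mathcal{A}_n=\{m\in\mathcal{M}:\alpha_{m,n}=1\}$ denote the active RRHs, and regard $r_{k,n}\left(\bm\alpha_n,\bm p_n\right)$ as a function of $\{p_{m,n}\}_{m\in\mathcal{A}_n}$ on the nonnegative orthant $\{p_{m,n}\geq 0:m\in\mathcal{A}_n\}$.

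\emph{Step 1 --- concavity of $\gamma_{k,n}$ in the power variables.} I would expand the square in~\eqref{E:SCSNR} and write $\gamma_{k,n}\left(\bm\alpha_n,\bm p_n\right)=\frac{1}{\sigma^2}\sum_{m\in\mathcal{A}_n}\sum_{m'\in\mathcal{A}_n}\left|h_{k,m,n}\right|\left|h_{k,m',n}\right|\sqrt{p_{m,n}\,p_{m',n}}$. Each diagonal term ($m=m'$) equals $\left|h_{k,m,n}\right|^2 p_{m,n}$, which is linear in $p_{m,n}$, and each off-diagonal term is a nonnegative multiple of the geometric mean $\sqrt{p_{m,n}\,p_{m',n}}$, which is jointly concave on $\mathbb{R}_+^2$ (a short Hessian computation shows its Hessian is negative semidefinite, or one may simply invoke the concavity of the geometric mean). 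Hence $\gamma_{k,n}$ is a nonnegative-weighted sum of jointly concave functions and is therefore jointly concave in $\{p_{m,n}\}_{m\in\mathcal{A}_n}$; it is also evidently nonnegative on the domain of interest.

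\emph{Step 2 --- composition with the logarithm, and the main obstacle.} The scalar map $x\mapsto\log_2(1+x)$ is concave and nondecreasing on $[0,\infty)$, and composing a concave nondecreasing scalar function with a concave function preserves concavity; applying this with the concave, nonnegative $\gamma_{k,n}$ from Step~1 shows that $\log_2\!\left(1+\gamma_{k,n}\left(\bm\alpha_n,\bm p_n\right)\right)$ is jointly concave in $\{p_{m,n}\}_{m\in\mathcal{A}_n}$, and multiplying by the positive constant $B/N$ yields the claim for $r_{k,n}$ in~\eqref{E:SCRate}. The only point requiring care is Step~1: one cannot naively argue ``concave composed with squaring,'' since $x\mapsto x^2$ is convex rather than concave, so the square of the sum in~\eqref{E:SCSNR} must be expanded and its cross terms recognized as scaled geometric means before concavity can be asserted; the remaining steps are routine.
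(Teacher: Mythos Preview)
Your argument is correct: expanding the square in~\eqref{E:SCSNR} into diagonal (linear) terms and off-diagonal scaled geometric means, invoking the joint concavity of the two-variable geometric mean, and then composing with the concave nondecreasing map $x\mapsto\log_2(1+x)$ is exactly the right way to proceed, and you correctly flag that one cannot simply treat $\gamma_{k,n}$ as a convex square of a concave sum. The paper itself does not give a self-contained proof but defers to~\cite[Appendix~A]{stephen-zhang2016joint}; your argument is the standard one for this type of coherent-combining rate expression and is in all likelihood the same as the one in that reference, so there is nothing substantive to contrast.
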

\begin{proof}
Please refer to~\cite[Appendix~A]{stephen-zhang2016joint}.
\end{proof}
If, on any SC $n$, RRH $m$ transmits to a user whose requested content is not cached, then, this user's data must be transmitted to RRH $m$ over its fronthaul link, from the CP. However, depending on the popularity profile, since several users may request the same content, each RRH only needs to fetch the unique data corresponding to a particular content from the CP. Moreover, this transfer of data from the CP to the RRH must be at a rate that is at least equal to the maximum rate at which it is to be transmitted over the OFDMA SCs, where the maximization is over all the users requesting this content. Thus, the rate at which RRH $m$ must receive over its fronthaul, the unique data %corresponding to the contents 
to be transmitted to the users over all $N$ SCs, must not exceed its fronthaul capacity $\bar{R}_m$, which is expressed by the constraint 
%\begin{align}
%%R_m\left(\bm\alpha_n,\bm\nu_n,\bm p_n\right)=
%\sum_{f=1}^F\left(1-c_{m,f}\right)\max_{k\in\mathcal{K}}\left\{u_{k,f}\sum_{n=1}^N\alpha_{m,n}\nu_{k,n}r_{k,n}\left(\bm\alpha_n,\bm p_n\right)\right\},%\leq \bar{R}_m,%~\forall \quad m\in\mathcal{M}.%\label{E:FHRC}
%\label{E:FHRR}
%\end{align}
%$m\in\mathcal{M}$, and since the above rate cannot exceed the fronthaul capacity of RRH $m$, we have the following constraints
\begin{align}
&\sum_{f=1}^F\left(1-c_{m,f}\right)\max_{k\in\mathcal{K}}\left\{u_{k,f}\sum_{n=1}^N\alpha_{m,n}\nu_{k,n}r_{k,n}\left(\bm\alpha_n,\bm p_n\right)\right\}\leq\bar{R}_m,\notag\\&~\forall m\in\mathcal{M}.
\label{C:FHRC}
\end{align}
%$\forall m\in\mathcal{M}$. 
Note that if a particular content $f$ is not requested by any user, i.e. $u_{k,f}=0~\forall k\in\mathcal{K}$, or if the content is already cached at the RRH, i.e. $c_{m,f}=1$, it does not contribute to the %fronthaul rate requirement 
summation over $f$ in~\eqref{C:FHRC}. In the next section, we formulate the proposed joint resource allocation problem. 
\section{Problem Formulation}
We aim to minimize the total transmit power of the RRHs over all the SCs subject to the minimum rate constraints at each user denoted by $\underaccent{\bar}{R}_k,~k\in\mathcal{K}$, and the fronthaul rate constraints at the RRHs, denoted by $\bar{R}_m,~m\in\mathcal{M}$, by optimizing the user-SC assignments $\left\{\bm\nu_n\right\}_{n\in\mathcal{N}}$, RRH selections $\left\{\bm\alpha_n\right\}_{n\in\mathcal{N}}$ and the transmit power allocations by the RRHs $\left\{\bm p_n\right\}_{n\in\mathcal{N}}$ over the SCs. 
%Let %$\mu_m\geq 0$ 
%$\omega_k\geq 0$ ,%~k\in\mathcal{K}$ 
%denote the weight assigned to the %fronthaul rate requirement of RRH $m$
%user $k\in\mathcal{K}$. 
The problem can then be formally stated as below.
\begin{subequations}
\label{P:Main}
\begin{align}
\mathop{\mathrm{minimize}}_{\left\{\bm p_n,\bm\alpha_n,\bm\nu_n\right\}_{n\in\mathcal{N}}}&\enspace\sum_{m=1}^M\sum_{n=1}^N p_{m,n}\tag{\ref{P:Main}}\\
\mathrm{subject}~\mathrm{to}\notag\\
&\enspace\text{\cref{C:FHRC}}\notag\\
&\enspace\sum_{n=1}^N\nu_{k,n}r_{k,n}\left(\bm\alpha_n,\bm p_n\right)\geq\underaccent{\bar}{R}_k\quad\forall k\in\mathcal{K}\label{C:UserMinRateMain} \\
%&\enspace\sum_{n=1}^N p_{m,n}\leq \bar{P}_m\quad\forall m\in\mathcal{M}\label{C:AvgPCMain}\\
&\enspace p_{m,n}\geq 0\quad\forall m\in\mathcal{M},\enspace\forall n\in\mathcal{N}\label{C:PowerMain}\\
&\enspace\alpha_{m,n}\in\{0,1\}\quad\forall m\in\mathcal{M},\enspace\forall n\in\mathcal{N}\label{C:RRHSelMain}\\
&\enspace\bm 1^\mathsf{T}\bm\nu_n\leq 1\quad\forall n\in\mathcal{N}\label{C:USCSUMain}\\
&\enspace\nu_{k,n}\in\{0,1\}\quad\forall k\in\mathcal{K},\enspace\forall n\in\mathcal{N}.\label{C:SCAMain}
\end{align}
\end{subequations}
If there is no caching of contents at the RRHs, it is generally more energy efficient to have the nearest RRHs to a user to cooperatively transmit to it on any SC. However, with a given caching and user request profile, the RRH selection and user assignment on each SC also need to take into account the availability of cached contents at the RRHs, to maximize their cooperative transmission gain and also reduce their fronthaul rates required. 

Next, to simplify constraint~\eqref{C:FHRC}, we introduce auxiliary variables $\rho_{m,f}\geq 0$ such that $\rho_{m,f}=\tfrac{1}{\bar{R}_m}\max_{k\in\mathcal{K}}\left\{u_{k,f}\sum_{n=1}^N\alpha_{m,n}\nu_{k,n}r_{k,n}\left(\bm\alpha_n,\bm p_n\right)\right\}$. %, to represent the maximum data rate requested for content $f\in\mathcal{F}$, over all the users that are served by RRH $m\in\mathcal{M}$, i.e.,
%\begin{align}
%\rho_{m,f}=\max_{k\in\mathcal{K}}\left\{u_{k,f}\sum_{n=1}^N\alpha_{m,n}\nu_{k,n}r_{k,n}\left(\bm\alpha_n,\bm p_n\right)\right\}.
%\label{E:rho}
%\end{align}
 %the fronthaul of RRH $m$ for a particular content $f\in\mathcal{F}$. If every content is requested by some user, and  Although there can be $MF$ such vari 
Then, problem~\eqref{P:Main} can be equivalently expressed as follows,
\begin{subequations}
\label{P:MainAV}
\begin{align}
\mathop{\min}_{\substack{\left\{\bm p_n,\bm\alpha_n,\bm\nu_n\right\}_{n\in\mathcal{N}},\\
\left\{\rho_{m,f}\right\}%_{m\in\mathcal{M},f\in\mathcal{F}}
}}&\enspace\enspace\sum_{m=1}^M\sum_{n=1}^N p_{m,n}\tag{\ref{P:MainAV}}\\
\mathrm{s.t.}\nonumber\\
&\enspace %\left(1-c_{m,f}\right)
\frac{u_{k,f}\left(1-c_{m,f}\right)}{\bar{R}_m}\sum_{n=1}^N\alpha_{m,n}\nu_{k,n}r_{k,n}\left(\bm\alpha_n,\bm p_n\right)\notag\\
&\enspace\leq u_{k,f}\left(1-c_{m,f}\right)\rho_{m,f}\notag\\
&\enspace%\substack{
\forall m\in\mathcal{M},\forall f\in\mathcal{F},\forall k\in\mathcal{K}%}
\label{C:MaxFHRUCMainAV}\\
&\enspace\sum_{f=1}^F\left(1-c_{m,f}\right)\rho_{m,f}\leq 1%\bar{R}_m
\quad\forall m\in\mathcal{M}\label{C:FHRCMainAV}\\
%&\enspace\sum_{f_k=1}^F\left(1-c_{m,f_k}\right)\rho_{m,f_k}\leq\bar{R}_m\quad\forall m\in\mathcal{M}\label{C:FHRCMainAV}\\
&\enspace\text{\cref{C:UserMinRateMain,C:PowerMain,C:RRHSelMain,C:USCSUMain,C:SCAMain}.}\notag
\end{align}
\end{subequations}
Note that  the constraints in~\eqref{C:MaxFHRUCMainAV} and~\eqref{C:FHRCMainAV} do not apply for those RRHs that have already cached a particular content requested by some user. Thus, the number of constraints in~\eqref{C:MaxFHRUCMainAV} and~\eqref{C:FHRCMainAV} depend on the caching status at the RRHs and the content requests by the users. Problem~\eqref{P:MainAV} is non-convex due to the integer constraints on the RRH selections $\bm\alpha_n$ and user-SC assignments $\bm\nu_n,~n\in\mathcal{N}$. Even if $\bm\alpha_n$ and $\bm\nu_n$ are fixed on all $n$, problem~\eqref{P:MainAV} is still non-convex, since constraint~\eqref{C:MaxFHRUCMainAV} is non-convex due to~\cref{L:rknConc}. 
\section{Proposed Solution}
In~\cite{yu-lui2006dual}, %a general duality theory for non-convex optimization problems in multi-carrier systems was presented, and 
it was shown that the duality gap for non-convex optimization problems, where the objective and constraints are separable over the SCs, goes to zero as the number of SCs goes to infinity. %, since such problems satisfy the ``time-sharing property" under this condition. %With the introduction of the variables $\left\{\rho_{m,f}\right\}_{m\in\mathcal{M},f\in\mathcal{F}}$, 
In problem~\eqref{P:MainAV}, the objective, as well as the constraints~\eqref{C:MaxFHRUCMainAV} are separable over the SCs, while the constraints~\eqref{C:FHRCMainAV}, although not separable over the SCs, are convex, and hence do not affect the convexity of the problem. Thus, due to the time-sharing property of~\cite{yu-lui2006dual}, the duality gap of problem~\eqref{P:MainAV} also goes to zero as $N$ goes to infinity. Since $N$ is typically large in practice, we thus propose to apply the Lagrange duality method to solve problem~\eqref{P:MainAV}. 
Let $\lambda_{m,k,f}\geq 0,~f\in\mathcal{F},~k\in\mathcal{K},~m\in%\bar{
\mathcal{M}%}^\mathsf{c}_{f_k}
$ denote the dual variables associated with the fronthaul constraints in~\eqref{C:MaxFHRUCMainAV}, and $\mu_k\geq 0,~k\in\mathcal{K}$, denote the dual variables corresponding to the %individual power 
minimum rate constraints at the users in~\eqref{C:UserMinRateMain}. If $\mathcal{M}^\textsf{c}_{f_k}\triangleq\left\{m|c_{m,f_k}=0\right\},~k\in\mathcal{K}$ denotes the set of RRHs that do not have the content $f_k$ requested by user $k\in\mathcal{K}$, then %, for a given caching profile at the RRHs and a user request profile, 
there are $C\triangleq\sum_{k=1}^K\left|\mathcal{M}^\textsf{c}_{f_k}\right|$ constraints in~\eqref{C:MaxFHRUCMainAV}.\footnote{For a finite set $\mathcal{A}$, $\left|\mathcal{A}\right|$ denotes its cardinality.} Let the vectors $\bm\lambda%\triangleq\begin{bmatrix}\lambda_{1,f_1},\lambda_{1,f_2},\cdots,\lambda_{M,f_K}\end{bmatrix}^\mathsf{T}
\in\mathds{R}_+^{C\times 1}$ and $\bm\mu\in\mathds{R}_+^{K\times 1}$, denote the collections of  these dual variables. Then, the (partial) Lagrangian of problem~\eqref{P:MainAV} with respect to the constraints in~\eqref{C:MaxFHRUCMainAV} and~\eqref{C:UserMinRateMain} can be expressed as 
\begin{align}%%1-col
&L\left(\left\{\bm\nu_n,\bm\alpha_n,\bm p_n\right\}_{n\in\mathcal{N}},\left\{\rho_{m,f}\right\}%_{m\in\mathcal{M},f\in\mathcal{F}}
,\bm\lambda,\bm \mu\right)\notag\\
%&=\sum_{m=1}^M\sum_{n=1}^N p_{m,n}+\sum_{m=1}^M\sum_{k=1}^K\sum_{f=1}^F\lambda_{m,k,f}u_{k,f}\left(1-c_{m,f}\right)\notag\\
%&\quad\cdot\left(\frac{1}{\bar{R}_m}\sum_{n=1}^N\alpha_{m,n}\nu_{k,n}r_{k,n}\left(\bm\alpha_n,\bm p_n\right)-\rho_{m,f}\right)\notag\\
%&\quad+\sum_{k=1}^K\mu_k\left(1-\frac{1}{\underaccent{\bar}{R}_k}\sum_{n=1}^N\nu_{k,n}r_{k,n}\left(\bm\alpha_n,\bm p_n\right)\right)\notag\\
&=\sum_{n=1}^N L_n\left(\bm\nu_n,\bm\alpha_n,\bm p_n,\bm\lambda,\bm \mu\right)\notag\\
&\quad-\sum_{m=1}^M\sum_{k=1}^K\sum_{f=1}^F\lambda_{m,k,f}u_{k,f}\left(1-c_{m,f}\right)\rho_{m,f}+\sum_{k=1}^K\mu_k,
\label{E:LagW}
\end{align}
%\begin{align}%%1-col
%&L\left(\left\{\bm\nu_n,\bm\alpha_n,\bm p_n\right\}_{n\in\mathcal{N}},\left\{\rho_{m,f}\right\}_{m\in\mathcal{M},f\in\mathcal{F}},\bm\lambda,\bm \mu\right)\notag\\
%&=\sum_{n=1}^N\sum_{k=1}^K\omega_k\nu_{k,n}r_{k,n}\left(\bm\alpha_n,\bm p_n\right)\notag\\
%&\quad-\sum_{k=1}^K\sum_{m=1}^M\lambda_{m,f_k}\left(1-c_{m,f_k}\right)\left(\sum_{n=1}^N\alpha_{m,n}\nu_{k,n}r_{k,n}\left(\bm\alpha_n,\bm p_n\right)-\rho_{m,f_k}\right)\notag\\
%&\quad-\sum_{m=1}^M\mu_m\left(\sum_{n=1}^N p_{m,n}-\bar{P}_m\right)\notag\\
%&=\sum_{n=1}^N L_n\left(\bm\nu_n,\bm\alpha_n,\bm p_n,\bm\lambda,\bm \mu\right)+\sum_{k=1}^K\sum_{m=1}^M\left(1-c_{m,f_k}\right)\lambda_{m,f_k}\rho_{m,f_k}+\sum_{m=1}^M\mu_m\bar{P}_m,
%\label{E:LagW}
%\end{align}
where 
\begin{align}%%1-col
&L_n\left(\bm\nu_n,\bm\alpha_n,\bm p_n,\bm \lambda,\bm \mu\right)\notag\\
&\triangleq\sum_{m=1}^M p_{m,n}+\sum_{m=1}^M\sum_{k=1}^K\sum_{f=1}^F\alpha_{m,n}\nu_{k,n}u_{k,f}\left(1-c_{m,f}\right)\frac{\lambda_{m,k,f}}{\bar{R}_m}\notag\\
&\enspace\cdot r_{k,n}\left(\bm\alpha_n,\bm p_n\right)-\sum_{k=1}^K\frac{\mu_k}{\underaccent{\bar}{R}_k}\nu_{k,n}r_{k,n}\left(\bm\alpha_n,\bm p_n\right).\label{E:Lagn}
\end{align}
The Lagrange dual function is thus given by  
\begin{subequations}
\label{E:DualFunc} 
\begin{align}
&g(\bm\lambda,\bm\mu)=\notag\\
&~\min_{\substack{\left\{\bm p_n,\bm\alpha_n,\bm\nu_n\right\}_{n\in\mathcal{N}}\\\left\{\rho_{m,f}\right\}%_{m\in\mathcal{M},f\in\mathcal{F}}
}}~L\left(\left\{\bm\nu_n,\bm\alpha_n,\bm p_n,\right\}_{n\in\mathcal{N}},\left\{\rho_{m,f}\right\}%_{m\in\mathcal{M},f\in\mathcal{F}}
,\bm\lambda,\bm\mu\right)\tag{\ref{E:DualFunc}}\\
&~\mathrm{s.t.}%\nonumber\\
~\text{\cref{C:FHRCMainAV,C:PowerMain,C:RRHSelMain,C:USCSUMain,C:SCAMain}}.\notag
\end{align}
\end{subequations}
%From the structure of the Lagrangian in~\eqref{E:LagW}, 
Using~\eqref{E:LagW}, the dual function in~\eqref{E:DualFunc} can be %split into two separate problems 
expressed as 
\begin{align}
g(\bm\lambda,\bm\mu)=g_1(\bm\lambda,\bm\mu)+g_2(\bm\lambda,\bm\mu)+\sum_{k=1}^K\mu_k,\label{E:DualFuncSplit}
\end{align}
where
\begin{subequations}
\label{P:DualFunc1} 
\begin{align}
g_1(\bm\lambda,\bm\mu)=~\min_{\left\{\bm p_n,\bm\alpha_n,\bm\nu_n\right\}_{n\in\mathcal{N}}}&\enspace \sum_{n=1}^N L_n\left(\bm\nu_n,\bm\alpha_n,\bm p_n,\bm\lambda,\bm \mu\right)\tag{\ref{P:DualFunc1}}\\
\mathrm{s.t.}%\nonumber\\
&\enspace\text{\cref{C:PowerMain,C:RRHSelMain,C:USCSUMain,C:SCAMain}}.\notag
\end{align}
\end{subequations}
and
\begin{subequations}
\label{P:DualFunc2} 
\begin{align}
%&
g_2(\bm\lambda,\bm\mu)&=%\notag\\&
\min_{\left\{\rho_{m,f}\right\}%_{m\in\mathcal{M},f\in\mathcal{F}}
}~-\sum_{m=1}^M\sum_{f=1}^F\sum_{k=1}^Ku_{k,f}\left(1-c_{m,f}\right)
\lambda_{m,k,f}\rho_{m,f}%\notag\\
%&\quad+\sum_{k=1}^K\mu_k
\tag{\ref{P:DualFunc2}}\\
&\mathrm{s.t.}~\text{\cref{C:FHRCMainAV}}.\notag
\end{align}
\end{subequations}
The minimization problem in~\eqref{P:DualFunc1} can be decomposed into $N$ parallel sub-problems, where each sub-problem corresponds to a single SC $n\in\mathcal{N}$, and all of them have the same structure given by
\begin{subequations} 
\label{P:DualFunc1n}
\begin{align}
\min_{\bm p_n,\bm\alpha_n,\bm\nu_n}&\enspace%\mathcal{L}
L_n\left(\bm\nu_n,\bm\alpha_n,\bm p_n,\bm\lambda,\bm \mu\right)\tag{\ref{P:DualFunc1n}}\nonumber\\
\mathrm{s.t.}%\nonumber\\
&\enspace\bm p_n\succeq\bm 0\label{C:VectPNZDFn}\\
&\enspace\bm\alpha_n\in\{0,1\}^{M\times 1}\label{C:BinRRHSelVectDFn}\\
%&\enspace\sum_{k=1}^K \nu_{k,n}\leq 1\\
&\enspace\bm 1^\mathsf{T}\bm\nu_n\leq 1\\
&\enspace\bm\nu_n\in\{0,1\}^{K\times 1}
\end{align}
\end{subequations}
where $L_n\left(\bm\nu_n,\bm\alpha_n,\bm p_n,\bm\lambda,\bm \mu\right)$ is defined in~\eqref{E:Lagn}. Next, we describe how to solve problem~\eqref{P:DualFunc1n} on each SC $n$. 

Let the user association on SC $n$ be fixed as $\bm\nu_n=\hat{\bm\nu}_n$. If $\hat{\bm\nu}_n=\bm 0$, no user is assigned to SC $n$, and since $\bm p_n\succeq\bm 0$, the objective of problem~\cref{P:DualFunc1n} as given in~\cref{E:Lagn}, is minimized by setting $\bm p_n=\bm 0$, irrespective of the RRH selection $\bm\alpha_n$. Thus, if no user is assigned to SC $n$, the power allocation over all RRHs is zero, as expected, and we assume $\bm\alpha_n=\bm 0$ without loss of generality. Otherwise, let $\hat{k}_n\in\mathcal{K}$ be the user assigned to SC $n$ so that $\hat{\nu}_{\hat{k}_n,n}=1$ and $\hat{\nu}_{k,n}=0\enspace\forall k\neq\hat{k}_n$. Also, let $f_{\hat{k}_n}\in\mathcal{F}$ denote the content requested by this user $\hat{k}_n$, where $u_{\hat{k}_n,f_{\hat{k}_n}}=1$ and $u_{\hat{k}_n,f}=0,~\forall f\neq f_{\hat{k}_n}$. Then, %from~\eqref{E:Lagn}, 
problem~\eqref{P:DualFunc1n} on each SC $n$ is reduced to 
\begin{subequations}
\label{P:DualFunc1nFixUA}
\begin{align}
\min_{\bm p_n,\bm\alpha_n}&\enspace-\left(\frac{\mu_{\hat{k}_n}}{\underaccent{\bar}{R}_{\hat{k}_n}}-\sum_{m=1}^M\left(1-c_{m,f_{\hat{k}_n}}\right)\alpha_{m,n}\frac{\lambda_{m,\hat{k}_n,f_{\hat{k}_n}}}{\bar{R}_m}\right)\notag\\
&\enspace\cdot r_{\hat{k}_n,n}\left(\bm\alpha_n,\bm p_n\right)+\sum_{m=1}^Mp_{m,n}\tag{\ref{P:DualFunc1nFixUA}}\\
\mathrm{s.t.}&\enspace\text{\cref{C:VectPNZDFn,C:BinRRHSelVectDFn}},\notag
\end{align}
\end{subequations}
which is non-convex due to the integer constraints~\eqref{C:BinRRHSelVectDFn} on $\bm\alpha_n$ and the coupled variables in the objective. However, for a given RRH selection $\tilde{\bm\alpha}_n$, the optimal power allocation $\tilde{\bm p}_n$ that solves problem~\eqref{P:DualFunc1nFixUA} is given by the following proposition. 
\begin{proposition}\label{Prop:OptPAFixURRHSel}
Let $\bm\alpha_n=\tilde{\bm\alpha}_n$ be fixed. Then, %for given $\bm\lambda$ and $\bm\mu$, 
the optimal %solution 
power allocation $\tilde{\bm p}_n$ on SC $n$ for %to 
problem~\eqref{P:DualFunc1nFixUA} is given by
\begin{align}%%1-col
\tilde{p}_{m,n}&=\left[\frac{B\cdot F_{\hat{k}_n,n}\left(\tilde{\bm\alpha}_n\right)G_{\hat{k}_n,n}\left(\tilde{\bm\alpha}_n\right)}{N\ln 2}-1\right]^+\frac{\tilde{\alpha}_{m,n}\left|h_{\hat{k}_n,m,n}\right|^2}{\left(G_{\hat{k}_n,n}\left(\tilde{\bm\alpha}_n\right)\right)^2\sigma^2}\notag\\
%&\enspace\cdot%\quad\forall m\in\mathcal{M},
\label{E:OptPA}
\end{align}
$\forall m\in\mathcal{M}$, where %$F_{\hat{k}_n,n}\left(\bm\alpha_n\right)$, $G_{\hat{k}_n,n}\left(\bm\alpha_n\right)$ are defined as
\begin{align}
F_{\hat{k}_n,n}\left(\bm\alpha_n\right)&\triangleq~\frac{\mu_{\hat{k}_n}}{\underaccent{\bar}{R}_{\hat{k}_n}}-\sum_{m=1}^M\frac{\left(1-c_{m,f_{\hat{k}_n}}\right)\alpha_{m,n}\lambda_{m,\hat{k}_n,f_{\hat{k}_n}}}{\bar{R}_m}\label{E:FHParam}\\
%\intertext{and}
G_{\hat{k}_n,n}\left(\bm\alpha_n\right)%\left(\hat{\bm\nu}_n,\bm\alpha_n,\bm\mu\right)
&\triangleq~\sum_{m=1}^M\frac{\alpha_{m,n}\left|h_{\hat{k}_n,m,n}\right|^2}{\sigma^2}.%,\quad&\bm\alpha_n\in\{0,1\}^M.
\label{E:AccParam}
\end{align}
%and $\hat{k}_n$ is such that $\nu_{\hat{k}_n,n}=1,\hat{\nu}_{k,n}=0~\forall k\neq\hat{k}_n$ when $\sum_{k=1}^K\hat{\nu}_{k,n}=1$.
\end{proposition}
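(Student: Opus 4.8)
The plan is to fix $\bm\alpha_n=\tilde{\bm\alpha}_n$ and exploit the structure of problem~\eqref{P:DualFunc1nFixUA} by splitting the power optimization into an inner ``power-splitting'' problem for a fixed received SNR and an outer one-dimensional problem over that SNR. First I would dispose of the degenerate case $F_{\hat{k}_n,n}(\tilde{\bm\alpha}_n)\le 0$: here the coefficient multiplying $-r_{\hat{k}_n,n}$ is nonnegative, and since $r_{\hat{k}_n,n}(\tilde{\bm\alpha}_n,\bm p_n)=\tfrac{B}{N}\log_2(1+\gamma_{\hat{k}_n,n})$ is nondecreasing in each $p_{m,n}$ while $\sum_m p_{m,n}$ is also nondecreasing, the objective is minimized by $\tilde{\bm p}_n=\bm 0$; this is consistent with~\eqref{E:OptPA} because the bracket $\tfrac{B F_{\hat{k}_n,n}G_{\hat{k}_n,n}}{N\ln 2}-1$ is then negative (as $G_{\hat{k}_n,n}\ge 0$), so its positive part vanishes. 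Also, any $m$ with $\tilde\alpha_{m,n}=0$ contributes nothing to $r_{\hat{k}_n,n}$ but adds to the objective, so $\tilde p_{m,n}=0$ there, matching the factor $\tilde\alpha_{m,n}$ in~\eqref{E:OptPA}.

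For $F_{\hat{k}_n,n}(\tilde{\bm\alpha}_n)>0$, by~\cref{L:rknConc} the term $-F_{\hat{k}_n,n}(\tilde{\bm\alpha}_n)\,r_{\hat{k}_n,n}(\tilde{\bm\alpha}_n,\bm p_n)$ is convex in $\{p_{m,n}\}$, so problem~\eqref{P:DualFunc1nFixUA} restricted to $\bm p_n\succeq\bm 0$ is convex. Writing $a_m\triangleq\tilde\alpha_{m,n}\left|h_{\hat{k}_n,m,n}\right|$ and introducing the auxiliary scalar $\gamma\triangleq\tfrac{1}{\sigma^2}\bigl(\sum_m a_m\sqrt{p_{m,n}}\bigr)^2\ge 0$, which is exactly $\gamma_{\hat{k}_n,n}$ in~\eqref{E:SCSNR}, the objective becomes $-F_{\hat{k}_n,n}(\tilde{\bm\alpha}_n)\tfrac{B}{N}\log_2(1+\gamma)+\sum_m p_{m,n}$, so for a fixed target $\gamma$ it remains to minimize $\sum_m p_{m,n}$ subject to $\sum_m a_m\sqrt{p_{m,n}}=\sigma\sqrt\gamma$. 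With $x_m=\sqrt{p_{m,n}}$ this is $\min\sum_m x_m^2$ s.t.\ $\sum_m a_m x_m=\sigma\sqrt\gamma$, $x_m\ge 0$, whose solution by the Cauchy--Schwarz equality condition (equivalently a one-line Lagrange argument) is $x_m\propto a_m$. This gives $p_{m,n}=\sigma^2\gamma\,a_m^2/(\sum_{m'}a_{m'}^2)^2$ and $\sum_m p_{m,n}=\sigma^2\gamma/\sum_{m'}a_{m'}^2=\gamma/G_{\hat{k}_n,n}(\tilde{\bm\alpha}_n)$, using $\sum_{m'}a_{m'}^2=\sigma^2 G_{\hat{k}_n,n}(\tilde{\bm\alpha}_n)$ from~\eqref{E:AccParam}.

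What then remains is the scalar convex problem $\min_{\gamma\ge 0}\ -\tfrac{B F_{\hat{k}_n,n}(\tilde{\bm\alpha}_n)}{N\ln 2}\ln(1+\gamma)+\gamma/G_{\hat{k}_n,n}(\tilde{\bm\alpha}_n)$; its stationarity condition gives $1+\gamma=\tfrac{B F_{\hat{k}_n,n}(\tilde{\bm\alpha}_n)G_{\hat{k}_n,n}(\tilde{\bm\alpha}_n)}{N\ln 2}$, and projecting onto $\gamma\ge 0$ yields $\gamma^\star=\bigl[\tfrac{B F_{\hat{k}_n,n}(\tilde{\bm\alpha}_n)G_{\hat{k}_n,n}(\tilde{\bm\alpha}_n)}{N\ln 2}-1\bigr]^+$ (the second derivative $\tfrac{B F_{\hat{k}_n,n}}{N\ln 2}(1+\gamma)^{-2}$ is positive, confirming a unique minimizer). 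Substituting $\gamma^\star$ into the inner-optimal split $p_{m,n}=\gamma\,a_m^2/(\sigma^2 G_{\hat{k}_n,n}(\tilde{\bm\alpha}_n)^2)$ and recalling $a_m^2=\tilde\alpha_{m,n}\left|h_{\hat{k}_n,m,n}\right|^2$ reproduces exactly~\eqref{E:OptPA}, and since $\tilde\alpha_{m,n}\in\{0,1\}$ this also covers the RRHs with $\tilde\alpha_{m,n}=0$.

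The one real subtlety, i.e.\ the main obstacle, is that the objective of~\eqref{P:DualFunc1nFixUA} couples the per-RRH powers through the sum $\sum_m\left|h_{\hat{k}_n,m,n}\right|\sqrt{p_{m,n}}$ inside the logarithm, so it is not separable over $m$ and cannot be handled by a naive per-coordinate argument; the key idea that makes the proof work is the two-stage reformulation — fix the received SNR, solve the resulting quadratic minimization over the power split in closed form via Cauchy--Schwarz, and reduce to a one-dimensional water-filling-type problem. Care is also needed at the boundary: when the unconstrained stationary point is negative one must project onto $\gamma\ge 0$, which produces the $[\,\cdot\,]^+$ and seamlessly absorbs the $F_{\hat{k}_n,n}(\tilde{\bm\alpha}_n)\le 0$ case into the single formula~\eqref{E:OptPA}.
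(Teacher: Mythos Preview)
Your proof is correct. The paper itself does not give a self-contained argument here; it simply points to \cite[Appendix~B]{stephen-zhang2016joint}, where the analogous result is established directly via the KKT conditions of the convex problem in $\bm p_n$ (after fixing $\tilde{\bm\alpha}_n$ and observing, via \cref{L:rknConc}, that the objective is convex when $F_{\hat{k}_n,n}(\tilde{\bm\alpha}_n)>0$). Your two-stage reformulation --- first solving the inner minimum-power split for a target SNR via Cauchy--Schwarz, then a scalar water-filling in $\gamma$ --- reaches the same closed form but is arguably more transparent: it makes explicit why the optimal per-RRH powers are proportional to $\left|h_{\hat{k}_n,m,n}\right|^2$ (the equality condition in Cauchy--Schwarz) and isolates the $[\cdot]^+$ threshold as a one-dimensional projection, whereas the KKT route obtains both features simultaneously from complementary slackness. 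Either approach is fine; yours has the advantage of being fully self-contained within the present paper.
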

\begin{proof}
The proof is similar to that in~\cite[Appendix~B]{stephen-zhang2016joint}.
\end{proof}
\Cref{Prop:OptPAFixURRHSel} shows that for given user association $\hat{k}_n$ and RRH selection $\tilde{\bm\alpha}_n$ on each SC $n$, the optimal power allocation has a threshold structure, which allocates zero power to all RRHs on SC $n$ if $F_{\hat{k}_n,n}\left(\tilde{\bm\alpha}_n\right)G_{\hat{k}_n,n}\left(\tilde{\bm\alpha}_n\right)\leq(N\ln 2)/B$. 
Otherwise, if $F_{\hat{k}_n,n}\left(\tilde{\bm\alpha}_n\right)G_{\hat{k}_n,n}\left(\tilde{\bm\alpha}_n\right)>(N\ln 2)/B$, the power allocation on each RRH $m\in\mathcal{A}_n$ %with $\tilde{\alpha}_{m,n}=1$, %is proportional to %$\tfrac{\left|h_{\hat{k}_n,m,n}\right|^2}{\sigma^2\mu_m^2}$
%$\left|h_{\hat{k}_n,m,n}\right|^2/\left(\sigma^2\mu_m^2\right)$, which 
depends on the wireless access channel gain $\left|h_{\hat{k}_n,m,n}\right|$ on SC $n$. and the dual variable $\mu_{\hat{k}_n}$ corresponding to the rate constraint~\eqref{C:UserMinRateMain}. Also, if all the RRHs have cached the content $f_{\hat{k}_n}$ requested by the user, i.e., if $c_{m,f_{\hat{k}_n}}=1,~\forall m\in\mathcal{M}$, the power allocation reduces to
\begin{align}
\tilde{p}_{m,n}&=\left[\frac{B\mu_{\hat{k}_n}G_{\hat{k}_n,n}\left(\tilde{\bm\alpha}_n\right)}{\underaccent{\bar}{R}_{\hat{k}_n}N\ln 2}-1\right]^+\frac{\tilde{\alpha}_{m,n}\left|h_{\hat{k}_n,m,n}\right|^2}{\left(G_{\hat{k}_n,n}\left(\tilde{\bm\alpha}_n\right)\right)^2\sigma^2}%\quad\forall m\in\mathcal{M},
\end{align}
If $\tilde{\bm\alpha}_n=\bm 0$, i.e., no RRH is selected, then $\tilde{\bm p}_n=\bm 0$. 
%When there are no fronthaul constraints, i.e., $\lambda=0$, the optimal power allocation in~\eqref{E:OptPA} reduces to 
%\begin{align}
%\tilde{p}_{m,n}=\frac{\tilde{\alpha}_{m,n}\left|h_{\hat{k}_n,m,n}\right|^2}{\sigma^2\mu_m^2\left[G_{\hat{k}_n,n}\left(\tilde{\bm\alpha}_n\right)\right]^2}\left[\frac{B\omega_{\hat{k}_n}}{N\ln 2}G_{\hat{k}_n,n}\left(\tilde{\bm\alpha}_n\right)-1\right]^+.
%\end{align} 
For the special case when there is only one RRH in the cluster, the power allocation in~\eqref{E:OptPA} becomes~(drop the subscript $m$)
\begin{align}
\tilde{p}_n=\left[\frac{B}{N\ln 2}\left(\frac{\mu_{\hat{k}_n}}{\underaccent{\bar}{R}_{\hat{k}_n}}-\frac{\left(1-c_{f_{\hat{k}_n}}\right)\lambda_{\hat{k}_n,f_{\hat{k}_n}}}{\bar{R}}\right)-\frac{\sigma^2}{\left|h_{\hat{k}_n,n}\right|^2}\right]^+,\label{E:OptPASingleRRH}
\end{align} 
which has the same form as the well-known water-filling solution, but in general with different water levels on different SCs $n\in\mathcal{N}$. %, which are determined by the user $\hat{k}_n$ assigned to SC $n$, through $\omega_{\hat{k}_n}$, and the dual variable $\lambda_{\hat{k}_n,f_{\hat{k}_n}}$ corresponding to the fronthaul constraint of the RRH, if the content requested by the user is not cached at the RRH. %Also notice from~\eqref{E:OptPASingleRRH} that if the content is cached at the RRH, then it uses a higher transmit power on the SC compared to the case when the content must be received over the fronthaul. 
%The water-level on SC $n$ can be negative 
If $\left(\mu_{\hat{k}_n}/\underaccent{\bar}{R}_{\hat{k}_n}\right)\leq\left(1-c_{f_{\hat{k}_n}}\right)\lambda_{\hat{k}_n,f_{\hat{k}_n}}/\bar{R}$, no power should be allocated to SC $n$. Thus, for given dual variables $\bm\lambda$ and $\bm\mu$, problem~\eqref{P:DualFunc1n} can be solved optimally using~\cref{Prop:OptPAFixURRHSel} as follows. First, fix the user on SC $n$ as $\hat{k}_n\in\mathcal{K}$. Then, for each of the $2^M$ possible RRH selections, compute the optimal power allocation $\tilde{\bm p}_n$ using~\eqref{E:OptPA}, and choose the optimal RRH selection $\hat{\bm\alpha}_n$ for the user $\hat{k}_n$ as the one that maximizes the objective of problem~\eqref{P:DualFunc1nFixUA} with the corresponding power allocation $\hat{\bm p}_n$ given by~\eqref{E:OptPA}. Then the optimal user association $\bar{\bm\nu}_n$ on SC $n$ can be found by 
choosing the user $\bar{k}_n$ that maximizes the objective of problem~\eqref{P:DualFunc1n}, with its corresponding optimal RRH selection and power allocation computed before. %, and denoted by $\bar{\bm\alpha}_n$ and $\bar{\bm p}_n$.
Similarly, for given $\bm\lambda$ and $\bm\mu$, the optimal solution to the linear program~\eqref{P:DualFunc2} %which can be solved efficiently using, e.g., an interior-point method~\cite{boyd2004convex}. 
is given by the following proposition. 
\begin{proposition}\label{Prop:OptSolDF2}
The optimal solution to problem~\eqref{P:DualFunc2} is given by
\begin{align}
&\rho_{m,f}=\begin{cases}1&\text{if }f=\displaystyle{\mathop{\arg\max}_{\ell\in\mathcal{F}}\left\{\sum_{k=1}^Ku_{k,\ell}\left(1-c_{m,\ell}\right)\lambda_{m,k,\ell}\right\}}\\
0&\text{otherwise}
\end{cases},\notag\\&~m\in\mathcal{M}.
\end{align}
%$m\in\mathcal{M}$.
\end{proposition}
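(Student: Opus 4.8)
The plan is to exploit the structure of problem~\eqref{P:DualFunc2}: it is a linear program in $\{\rho_{m,f}\}$ whose objective is a nonnegative combination of the variables (recall $\bm\lambda\succeq\bm 0$), and whose only constraint~\eqref{C:FHRCMainAV}, together with the nonnegativity $\rho_{m,f}\geq 0$, both decouple across the RRHs $m\in\mathcal{M}$. Consequently $g_2(\bm\lambda,\bm\mu)$ splits into $M$ independent subproblems, and in each one the task is to maximize $\sum_{f=1}^F(1-c_{m,f})b_{m,f}\rho_{m,f}$, where $b_{m,f}\triangleq\sum_{k=1}^K u_{k,f}\lambda_{m,k,f}\geq 0$, over the feasible set $\{\rho_{m,\cdot}\succeq\bm 0:\ \sum_{f=1}^F(1-c_{m,f})\rho_{m,f}\leq 1\}$, the maximization arising from the minus sign in the definition of $g_2$.

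First I would dispose of the cached contents: for every $f$ with $c_{m,f}=1$ the variable $\rho_{m,f}$ has coefficient zero both in the objective and in the constraint, hence is entirely free and may be set to $0$, which is exactly the ``otherwise'' branch of the asserted solution. For the remaining indices $\{f:c_{m,f}=0\}$ the subproblem becomes the maximization of the linear functional $\rho\mapsto\sum_{f:c_{m,f}=0} b_{m,f}\rho_{m,f}$ over the (scaled) probability simplex $\{\rho\succeq\bm 0:\ \sum_{f:c_{m,f}=0}\rho_{m,f}\leq 1\}$.

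Next, I would invoke the elementary inequality $\sum_{f:c_{m,f}=0} b_{m,f}\rho_{m,f}\leq(\max_{f:c_{m,f}=0} b_{m,f})\sum_{f:c_{m,f}=0}\rho_{m,f}\leq\max_{f:c_{m,f}=0} b_{m,f}$, valid for every feasible $\rho$, and observe that equality is attained by placing all of the unit budget on a single index $f^\star\in\mathop{\arg\max}_{\ell:c_{m,\ell}=0} b_{m,\ell}$, i.e.\ by setting $\rho_{m,f^\star}=1$ and $\rho_{m,f}=0$ otherwise, which is feasible. Since $(1-c_{m,\ell})b_{m,\ell}=\sum_{k=1}^K u_{k,\ell}(1-c_{m,\ell})\lambda_{m,k,\ell}$, this maximizing index is precisely the one written in the statement, so collecting the per-RRH solutions over all $m\in\mathcal{M}$ yields the claimed $\{\rho_{m,f}\}$.

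I expect essentially no serious obstacle here, as this is the standard ``an LP over a simplex is optimized at an extreme point'' argument; the only points needing a line of care are (i) checking that the cached-content variables are genuinely unconstrained and hence affect neither feasibility nor $g_2$, and (ii) the tie-breaking when the maximizer of $\sum_{k=1}^K u_{k,\ell}(1-c_{m,\ell})\lambda_{m,k,\ell}$ over $\ell\in\mathcal{F}$ is not unique, in which case any maximizer may be selected without changing the optimal value. (Should one prefer, one could instead verify the KKT/complementary-slackness conditions of this LP directly, but the bounding argument above is shorter.)
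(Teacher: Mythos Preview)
Your argument is correct. The paper itself does not actually give a proof: it only remarks that the result ``follows by contradiction from the structure of problem~\eqref{P:DualFunc2}'' and omits all details. Your approach instead gives a direct argument, exploiting the same structural observations (separability across $m$, linearity, and nonnegativity of the coefficients $\sum_{k}u_{k,f}(1-c_{m,f})\lambda_{m,k,f}$) to bound the per-RRH objective by its largest coefficient and exhibit the attaining extreme point of the simplex. This is at least as clean as the contradiction route the paper alludes to, and it makes explicit the two minor points the paper leaves implicit: that cached contents ($c_{m,f}=1$) are inert in both objective and constraint, and that ties among maximizers do not affect $g_2(\bm\lambda,\bm\mu)$.
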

\begin{proof}
The proof follows by contradiction from the structure of problem~\eqref{P:DualFunc2}, and the details are omitted.
\end{proof}
\cref{Prop:OptSolDF2} shows that problem~\eqref{P:DualFunc2} can be solved by searching for the content $f$ with largest value of $\sum_{k=1}^Ku_{k,\ell}\left(1-c_{m,\ell}\right)\lambda_{m,k,\ell}$ among all the contents requested by at least one user, and not cached at each RRH $m$, and then setting $\rho_{m,f}=1$, while $\rho_{m,\ell}=0$ for the other contents $\ell\neq f$. %using standard solvers to obtain the optimal values of the auxiliary variables denoted by $\left\{\bar{\rho}_{m,j}\right\}_{m\in\mathcal{M},f\in\mathcal{F}}$. 
The worst-case complexity of finding the value of $g_2\left(\bm\lambda,\bm\mu\right)$ is thus $O\left(MF\right)$, which is incurred when there are at least $F$ users, all the users request distinct contents, and none of those contents are cached by any of the RRHs. Now, the dual problem for~\eqref{P:Main} is given by 
\begin{align}
\max_{\bm\lambda\succeq 0,\bm\mu\succeq\bm 0}g(\bm\lambda,\bm\mu),\label{E:DualProb}
\end{align}
which is convex and can be solved efficiently, e.g., using the ellipsoid method %~\cite{boyd2014ellipsoid} 
to find the optimal dual variables $\bm\lambda^\star$ and $\bm\mu^\star$. Then, the optimal solutions to the problems~\cref{P:DualFunc1,P:DualFunc2} are given by $\left\{\bm\nu^\star_n,\bm\alpha^\star_n,\bm p^\star_n\right\}$ and $\left\{\rho^\star_{m,f}\right\}%_{m\in\mathcal{M},f\in\mathcal{F}}
$, computed as outlined above at the optimal dual variables $\bm\lambda^\star$ and $\bm\mu^\star$. The algorithm for solving problem~\eqref{P:Main} is thus given in~\cref{A:Overall}. %Thus, problem~\eqref{P:SU} can be optimally solved for large $N$ with an overall complexity of $O\left(2^MN\right)$. 
\begin{table}[h]
%\noindent\fbox{%
%\begin{varwidth}{\dimexpr\linewidth-2\fboxsep-2\fboxrule\relax}
\caption{Algorithm for problem~\eqref{P:Main}}\label{A:Overall}
%\begin{tabular}{|c|}
%\hline
\begin{framed}
\begin{algorithmic}[1]
\State Initialization: $\bm\lambda\succeq 0$, $\bm\mu\succeq\bm 0$
\Repeat 
\For {each $n\in\mathcal{N}$}
\State For each user $\hat{k}_n$ and RRH selection $\tilde{\bm\alpha}_n$, find optimal power allocation $\tilde{\bm p}_n$ using~\cref{Prop:OptPAFixURRHSel}\label{AL:ORRHSelP}
\State Choose RRH selection and corresponding optimal power allocation that minimizes objective in~\eqref{P:DualFunc1nFixUA} 
\State Choose user with optimal RRH selection and power allocation that minimizes objective in~\eqref{P:DualFunc1n} 
\EndFor
\State Solve problem~\eqref{P:DualFunc2} using~\cref{Prop:OptSolDF2} to get optimal $\left\{\rho_{m,f}\right\}%_{m\in\mathcal{M},f\in\mathcal{F}}
$
\State Update dual variables $\bm\lambda$ and $\bm\mu$ using the ellipsoid method
%\Until $\lambda$ and $\bm\mu$ converge to the desired accuracy
\Until ellipsoid algorithm converges to desired accuracy
\end{algorithmic}
%%\end{tabular}
%\end{varwidth}
%}
\end{framed}
\end{table}

Finding the optimal RRH selection and power allocation for a given user association %according to~\eqref{E:OptRRHSel} 
involves a search over $2^M$ values. %, and incurs a complexity of $O\left(2^M\right)$. 
Subsequently, finding the optimal user association involves a search over $K$ users. %, and has a complexity of $O(K)$. 
Each of the $N$ problems~\eqref{P:DualFunc1n} can thus be solved incurring a complexity of $O\left(K2^M\right)$ and hence, the computation of %dual function 
$g_1(\bm\lambda,\bm\mu)$ in~\eqref{E:DualFuncSplit} incurs an overall complexity of $O\left(NK2^M\right)$. Similarly, the worst-case complexity of solving problem~\eqref{P:DualFunc2} to compute $g_2(\bm\lambda,\bm\mu)$ in~\eqref{E:DualFuncSplit}, is $O\left(MF\right)$, according to~\cref{Prop:OptSolDF2}. The complexity of the ellipsoid method to find the optimal dual variables depends only on the size of the initial ellipsoid and the maximum length of the sub-gradients over the intial ellipsoid. %~\cite{boyd2014ellipsoid}. %, which do not depend on $M,K,F$ or $N$. 
Thus, the worst-case complexity of solving problem~\eqref{P:Main} using the algorithm in~\cref{A:Overall} is effectively given by $O\left(NK2^M+MF\right)$, which is not very high for reasonable cluster sizes with $M\leq 5$. Moreover, a greedy RRH selection as in~\cite{stephen-zhang2016joint,stephen-zhang2016fronthaul} may be used to further reduce the complexity to $O\left(NKM^2+MF\right)$. 
\section{Simulation Results}
For the simulation setup, we consider a CRAN cluster with $M=5$ RRHs. One RRH is located in the center of a square region with side $100$~meters~(m), while the others are located on the vertices. There are $K=10$ users randomly located within a larger square region with side $200$~m, whose center coincides with that of the RRH square region. The fronthaul links of all the RRHs are assumed to have the same capacity $\bar{R}_m=\bar{R}\enspace\forall m\in\mathcal{M}$. There are $F=50$ distinct contents and the users' requests follow a Zipf distribution~\cite{%breslau-etal1999web,
peng-etal2014joint,guo-etal2015cooperative,tao-etal2016content,park-etal2016joint}, according to which the probability that a user requests content $f\in\mathcal{F}$ is given by $\pi_f =f^{-\eta}/\sum_{\ell=1}^F\ell^{-\eta},~ f\in\mathcal{F}$.
%\begin{align}
%\pi_f = \frac{f^{-\eta}}{\sum_{\ell=1}^F\ell^{-\eta}},\quad f\in\mathcal{F}.
%\end{align}
Here $\eta$ is a shaping parameter that determines the skew of the distribution, and is set as $\eta = 0.9$, which is a typical value~\cite{%breslau-etal1999web,
guo-etal2015cooperative}, and assumed to be the same for all the users. The minimum data rate at which each user requests a content is $\underaccent{\bar}{R}_k=\underaccent{\bar}{R}=20$~Mbps $\forall k\in\mathcal{K}$, while the cache at each RRH is assumed to be capable of storing at most $S$ distinct contents.  

The wireless channel is centered at a frequency of $2$~GHz with a bandwidth $B=20$~MHz, following the Third Generation Partnership Project~(3GPP) Long Term Evolution-Advanced~(LTE-A) standard%~\cite{3gpp36211}
, and is divided into $N=64$ SCs using OFDMA. The combined path loss and shadowing is modeled as $38+30\log_{10}\left(d_{k,m}\right)+X$ %$128.1+37.6\log_{10}d_{k,m}+X$ 
in dB%~\cite{3gpp36931}
, where $d_{k,m}$ in~m is the distance between the RRH $m$ and the user $k$, and $X$ %in dB 
is the shadowing random variable, which is Gaussian distributed with a standard deviation of $6$~dB. The AWGN is assumed to have a power spectral density of $-174$~dBm/Hz with a noise figure of $9$~dB at each user. %, while the maximum transmit power at each RRH is $\bar{P}_m=\bar{P}=24$~dBm, $\forall m\in\mathcal{M}$~\cite{3gpp36931}. %and an antenna gain of $2$~dB is assumed, following the 3GPP LTE pico cell parameter specifications~\cite{3gpp36931}. 
The multi-path on each wireless %access 
channel is modeled using an exponential power delay profile with $N/4$ taps %\footnote{$\lceil x\rceil$ denotes the smallest integer greater than or equal to $x\in\mathds{R}$.}, 
and the small-scale fading on each tap is assumed to follow the Rayleigh distribution. %For convenience, we consider the sum-rate maximization problem, i.e., we solve problem~\eqref{P:Main} with $\omega_k=1,\forall k\in\mathcal{K}$. %The noise power spectral density is $-174$~dBm/Hz with additional noise figure of $6$~dB at each RRH. 
\begin{figure}[t]
\centering
\includegraphics[scale=0.45]{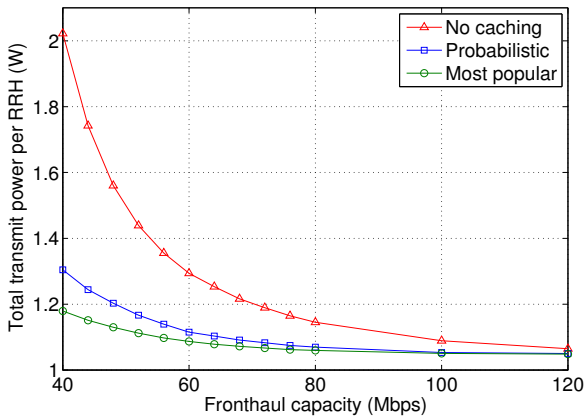}
\caption{Total transmit power normalized by number of RRHs $M$ vs.\ common fronthaul capacity $\bar{R}$ for system with $M=5$, $K=10$, $N=64$, $B=20$~MHz, $S=5$ and $\underaccent{\bar}{R}=20$~Mbps.}\label{F:SPvFHC}
\end{figure}
We compare the performance of the following three schemes:
\begin{itemize}
\item \textbf{Most popular content caching}: In this case, each RRH caches the most popular contents until its storage is full. Thus, in this case, if all the RRHs have the same storage size, all of them cache the same contents. With this caching status given, problem~\eqref{P:Main} is solved according to the algorithm in~\cref{A:Overall}.
\item \textbf{Probabilistic content caching}: In this case, each RRH independently caches contents according to their popularity probability until its storage is full. %without considering the popularity. %of the contents into account. Again, problem~\eqref{P:Main} is solved according to~\cref{A:Overall}.
\item \textbf{No caching}: In this case, none of the RRHs cache any of the contents, and all the users' data need to be obtained from the CP by the RRHs, over the fronthaul links.
\end{itemize}
\cref{F:SPvFHC} plots the total transmit power required over all SCs normalized by the number of RRHs, in Watts~(W), against the common fronthaul link capacity $\bar{R}$, averaged over many random user locations and content request profiles. From~\cref{F:SPvFHC}, it is observed that caching popular contents at the RRHs leads to a significant savings in the transmit power compared to a system without caching, especially when the fronthaul capacity is low. Moreover, the deterministic most popular content caching is seen to perform better than the probabilistic caching according to popularity, since in this case there is maximum opportunity for cooperative transmission by the RRHs to most of the users, while such opportunities are in general less when the contents are independently cached by the RRHs. At larger fronthaul capacities, the transmit power saving offered by caching becomes less, since in this case, the fronthaul itself can support cooperative transmission by most of the RRHs, even without caching.
\begin{figure}[t]
\centering
\includegraphics[scale=0.45]{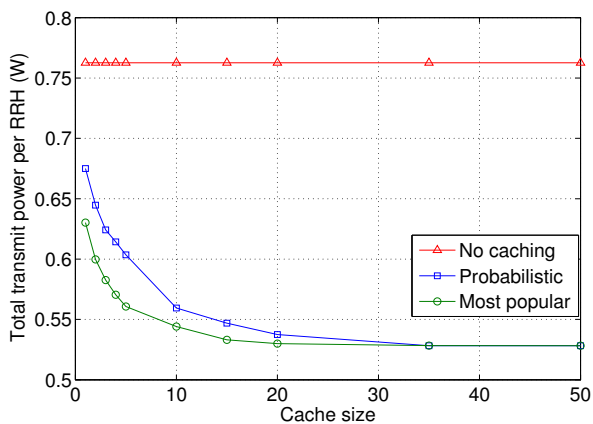}
\caption{Total transmit power normalized by number of RRHs $M$ vs.\ common cache size $S$ for system with $M=5$, $K=10$, $N=64$, $B=20$~MHz, $\bar{R}=80$~Mbps and $\underaccent{\bar}{R}=20$~Mbps.}\label{F:SPvCS}
\end{figure}
\cref{F:SPvCS} plots the transmit power against the common cache size $S$ at the RRHs. Similar trends as in~\cref{F:SPvCS} are observed, and caching even one file at each RRH can offer significant savings in the transmit power, leading to increased energy efficiency. 
\section{Conclusion}\label{Sec:Conc}
In this paper, we have studied the energy-efficient transmission design in an OFDMA-based CRAN with cache-enabled RRHs, when the caching status at the RRHs is known. We formulated a joint user association, RRH selection, and power allocation problem to minimize the total transmit power of the RRHs over all SCs subject to the RRHs' individual fronthaul capacity constraints and the minimum data rate constraints of the users. Although the problem is non-convex, we propose an efficient solution based on the Lagrange duality technique. Through numerical simulations, we compare different caching schemes, and show that the optimized resource allocation with caching, offers significant savings in transmit power  compared to a system with no caching at the RRHs, thus leading to a more energy-efficient network.
%%% references section
%%% can use a bibliography generated by BibTeX as a .bbl file
%%% BibTeX documentation can be easily obtained at:
%%% http://www.ctan.org/tex-archive/biblio/bibtex/contrib/doc/
%%% The IEEEtran BibTeX style support page is at:
%%% http://www.michaelshell.org/tex/ieeetran/bibtex/
\bibliographystyle{IEEEtran_mod}
%% argument is your BibTeX string definitions and bibliography database(s)
\bibliography{IEEEabrv,bibJournalList,ThesisBibliography}
%
% <OR> manually copy in the resultant .bbl file
% set second argument of \begin to the number of references
% (used to reserve space for the reference number labels box)
% \begin{thebibliography}{1}
% 
% \bibitem{IEEEhowto:kopka}
% H.~Kopka and P.~W. Daly, \emph{A Guide to \LaTeX}, 3rd~ed.\hskip 1em plus
%   0.5em minus 0.4em\relax Harlow, England: Addison-Wesley, 1999.
% 
% \end{thebibliography}
\end{document}